\newcommand{\setof}[1]{\left\{ {#1} \right\}}
\newcommand{\always}{\square}
\newcommand{\eventually}{\lozenge}
\newcommand{\conj}{\wedge}
\newcommand{\disj}{\vee}
\newcommand{\bigconj}{\bigwedge}
\newcommand{\nbhd}[2]{\mathcal{N}_{#1}(#2)}
\newcommand{\enbhd}[1]{\mathcal{N}_{#1}}
\renewcommand{\vec}[1]{\mathbf{#1}}
\newcommand{\graph}{\mathcal{G}}
\newcommand{\edges}{\mathcal{E}}
\newcommand{\nodes}{\mathcal{V}}
\newcommand{\atomic}{\mathcal{AP}}
\newcommand{\buchi}{B{\"u}chi~}
\newcommand{\card}[1]{\lvert #1 \rvert}
\newcommand{\powerset}[1]{2^{#1}}
\newcommand{\sensors}[1]{\mathbf{#1}}
\newtheorem{thm}{Theorem}
\newtheorem{prop}{Proposition}
\newtheorem{prob}{Problem}
\newtheorem{defn}{Definition}
\newtheorem{cor}{Corollary}
\title{A Temporal Logic-Based Hierarchical Network Connectivity Controller
}
\author{Hans Riess\thanks{University of Pennsylvania, Dept.\ of Electrical/Systems Engineering}
\and{Yiannis Kantaros}\thanks{U.~Penn., Dept.\ of Electrical/Systems Eng., Dept.\ of Computer and Information Science}
\and{\stepcounter{footnote}\stepcounter{footnote}\stepcounter{footnote}\stepcounter{footnote} George Pappas}\thanks{U.~Penn., Dept.\ of Electrical/Systems Engineering}
\and{Robert Ghrist}\thanks{U.~Penn., Dept.\ of Electrical/Systems Eng., Dept.\ of Mathematics}}
\begin{document}

\date{}
\maketitle

	
\fancyhead[L]{Preprint: to appear in \textit{2021 Proceedings of the Conference on Control and Its Applications}, SIAM}




\begin{abstract} \small\baselineskip=9pt
In this paper, we consider networks of static sensors with integrated sensing and communication capabilities. The goal of the sensors is to propagate their collected information to every other agent in the network and possibly a human operator. Such a task requires constant communication among all agents which may result in collisions and congestion in wireless communication. To mitigate this issue, we impose locally non-interfering connectivity constraints that must be respected by every agent. We show that these constraints along with the requirement of propagating information in the network can be captured by a Linear Temporal Logic (LTL) framework. Existing temporal logic control synthesis algorithms can be used to design correct-by-construction communication schedules that satisfy the considered LTL formula. Nevertheless, such approaches are centralized and scale poorly with the size of the network. We propose a hierarchical LTL-based algorithm that designs communication schedules that determine which agents should communicate while maximizing network usage. We show that the proposed algorithm is complete and demonstrate its efficiency and scalability through analysis and numerical experiments.
\end{abstract}

\section{Introduction}
\label{sec:introduction}
The 5G standard \cite{agiwal2016} offers drastic improvements in latency, base-station capacity, and data rates, and moves away from the base-station centered network topology in favor of a more complex device-centered topology.
In low-range wireless and LIDAR (Light Detection and Ranging) networks, especially sensor networks, the topology of both the network and the coverage region can be uncertain or unknown to command units.
These considerations highlight the need for more sophisticated scheduling protocols informed by the topology of the network.
One such approach to optimize network usage while avoiding collisions is to allow the presence of simultaneous access to the network while satisfying a local non-interference rule.

In this paper, we consider networks of static sensors with integrated sensing and communication capabilities.
The sensors collect information about the ambient environment which needs to be available to everyone in the network.
Such tasks require constant communication among all agents which may result in collisions and  congestion in wireless communication. To mitigate this issue, we propose a novel hierarchical algorithm that designs communication schedules that determine which agents should communicate (i.e., which communication links should be activated) and when communication should happen while optimizing network usage. The schedules are designed by solving an optimal temporal logic control problem. In particular, we define a Linear Temporal Logic formula \cite{baier2008} that requires all available communication links among agents to be activated infinitely often (liveness) to ensure that information can be propagated across the network while respecting local non-interference constraints that require each agent to connect to only one agent at a time. Schedules that satisfy the considered temporal logic specification  can be computed by employing existing temporal logic control synthesis algorithms \cite{smith2011optimal,kloetzer2009automatic,fainekos2005hybrid,baier2008} or off-the-shelf model checkers \cite{holzmann2004spin}. Nevertheless, such approaches are centralized and scale poorly with the size of the network.  To mitigate this issue, we propose a hierarchical approach that relies on decomposing the network into smaller networks via commanding agents that compute plans for sensors in their range. 

We show through simulation studies that the proposed \emph{Locally Non-interfering Connectivity} (LNC) method can be used to design schedules for large static sensor networks. Our key contribution to the literature is a distributed method for satisfying linear temporal logic constraints; the particular focus of this work is synthesizing a link activation schedule subject to non-interfering connectivity constraints, but our methods translate, without any modification, to any network multi-agent system with local temporal logic constraints. As a secondary contribution, we open a new line of research that studies information propagation over graphs subject to semantic constraints (e.g. in linear temporal logic or other modal logics).

\subsection*{Related Research}
Several works have been proposed to design controllers that ensure point-to-point or end-to-end network connectivity of \textit{mobile} robot networks for all time.
Such controllers either rely on graph theoretic approaches  \cite{zavlanos2007potential,ji2007distributed,zavlanos2008distributed,sabattini2013decentralized,zavlanos2011graph} or employ more realistic communication models that take into account path loss, shadowing, and multi-path fading as well as optimal routing decisions for desired information rates
\cite{zavlanos2013network,yan2012robotic,kantaros2016distributed,stephan2017concurrent}. Intermittent connectivity methods that allow the mobile agents to temporarily get disconnected to accomplish their tasks and occasionally revert to connected configurations have also been proposed \cite{hollinger2010multi,kantaros2016}.
Connectivity scheduling problems for static networks, similar to the one considered here, are addressed in \cite{zavlanos2008,chatzipanagiotis2016}. In particular, \cite{zavlanos2008} addresses the control of switching networks via Laplacian dynamics in which nodes can alternate between \textit{sleep} and \textit{active} states while guaranteeing a multi-hop path to and from a subset of boundary nodes. Note that \cite{zavlanos2008} addresses a point-to-point connectivity problem while here 
our goal is to ensure that information collected by any sensor at any time will eventually be propagated to any other sensor. A connectivity scheduling problem is posed and solved via mixed-integer programming in \cite{chatzipanagiotis2016}, although the goals of the problem addressed there (i.e. providing communication services to static nodes) are different from ours. As a novel contribution of this paper, we furnish a distributed algorithm to design communication schedules over wireless sensor networks that maximize network usage and respect non-interference constraints as specified by an LTL formula.
\subsection*{Outline}
The rest of the paper is summarized as follows. In Section \ref{sec:problem}, we formally define the scheduling connectivity problem with non-interference constraints that is solved by a distributed algorithm presented in Section \ref{sec:solution}. In Section \ref{sec:simulations}, we present simulation studies.
In Section \ref{sec:discussion}, we discuss a blueprint for a completely decentralized solution to our scheduling problem, as well as posit a modification of our main Algorithm to better suit networks whose links are not determined by proximity.

\section{Problem Formulation}
\label{sec:problem}
Consider a collection of $N$ stationary agents located at positions $x_i$, $i\in\{1,\dots,N\}$ with integrated sensing and wireless communication capabilities that are placed in key locations where continuous sensing and communication is required.
Assume $\sensors{X} = \setof{x_i}_{i=1}^N$ is a subset of $
\sensors{Z}$, a compact subset of $\mathbb{R}^d$.
Each agent collects information that needs to be propagated to all other agents.
Specifically, we assume that each agent is capable of communicating with any other agent that is within range $r>0$.
This setup can be modeled as an undirected graph $\graph = \left( \nodes, \edges \right)$ where $\nodes=\{1,\dots,N\}$ is the set of nodes and $\edges=\setof{ij|~\lVert x_i-x_j\rVert\leq r} \subseteq \nodes\times\nodes$ is a set of edges, or links.
For any node $i\in\nodes$, let $\mathcal{N}_i$ be a set that collects all neighbors of $i$, i.e.,
$\mathcal{N}_i=\{j\neq i|~ij\in\edges\}$. Also, for any edge $ij$ we define the \emph{neighborhood of a link}, $\enbhd{ij} = \{i'j'| \left( i'= i, j' \in \mathcal{N}_i \right) \vee \left( j'=j, i' \in \mathcal{N}_j \right)\}$ that collects all edges associated with nodes $i$ and $j$. For convenience, let $B_r(x) = \setof{y~\vert~ \|x - y\| < r}$ denote the open ball of radius $r$ centered around $x$.

To ensure the propagation of information, frequent intermittent communication among all agents is required.
This may result in communication congestion, especially in large and dense networks.
Therefore, we impose locally non-interfering communication constraints.
In particular, we require that when communication between nodes $i$ and $j$ happens, i.e., when the communication link $ij$ is active, all other communication links in the set $\enbhd{ij}$ are deactivated.
We say that the network is \emph{locally non-interfering} at link $e \in \edges$, when all links $e' \in \enbhd{ij}$ are deactivated.
Also, the network is \emph{globally non-interfering} if it is locally non-interfering for every edge $e \in \edges$.
Our goal is to compute a schedule that determines the order in which the links should be activated so that (i) all links are activated infinitely often to ensure that information collected by any sensor node will eventually be transmitted to all other nodes while (ii)  ensuring globally non-interfering communication all the time.

To formally model requirements (i)-(ii), we employ Linear Temporal Logic (LTL). The basic ingredients of LTL are a set of atomic propositions $\mathcal{AP}$, the boolean operators, i.e., conjunction $\wedge$, and negation $\neg$, and two temporal operators, next $\bigcirc$ and until $\mathcal{U}$. LTL formulas over a set $\mathcal{AP}$ can be constructed based on the following grammar: $\phi::=\texttt{true}~|~\pi~|~\phi_1\wedge\phi_2~|~\neg\phi~|~\bigcirc\phi~|~\phi_1~\mathcal{U}~\phi_2$, where $\pi\in\mathcal{AP}$. For brevity, we abstain from presenting the derivations of other Boolean and temporal operators, e.g., \textit{always} $\always$, \textit{eventually} $\eventually$, \textit{infinitely often} $\always \eventually$, \textit{at some point forever} $\eventually \always$, \textit{disjunction} $\disj$, and \textit{implication} $\Rightarrow$, which can be found in \cite{baier2008}. Specifically, requirements (i)-(ii) can be captured by the following LTL specification:
\begin{equation}
\label{eq:global-phi}
\varphi = \bigconj_{i j \in \mathcal{E}} \varphi_{i j},
\end{equation}
that requires the LTL sub-formula $\varphi_{i j}$ to be true for all links $ij\in\mathcal{E}$, where
\begin{equation}
\label{eq:local-phi}
\varphi_{i j} = \underbrace{\always \left(  \pi^{i j} \Rightarrow \bigconj_{i' j' \in \mathcal{N}_{i j} \setminus \{ ij\}} \neg \pi^{i' j'} \right)}_{\text{(a)}} \conj \,  \underbrace{\always \eventually \pi^{i j}}_{\text{(b)}},
\end{equation}
where $\pi^{i j}$ is a Boolean variable that is true whenever link $i j$ is activated. Note that  formula \eqref{eq:local-phi} requires link $ij$ to be activated infinitely often while respecting the locally non-interference constraints, i.e., that all other links $e'\in\enbhd{ij} $ are deactivated, as captured by part (b) and (a), respectively.  

\subsection*{Modeling Link Activations as Transition Systems}

To design a schedule for all agents, i.e., infinite sequences of activated links, that satisfy \eqref{eq:global-phi}, we view the discrete-time dynamical system of link activations as a transition system, $\vec{TS}$.
First, we model a single link $ij$ as a transition system. 

\begin{defn}
Given a link $ij \in \edges$, we define the \emph{link transition system}, $TS^{i j} = ( Q^{i j}, q^{i j}_{0}, \Sigma^{i j}, \rightarrow_{i j}, \atomic^{i j}, o^{ij})$.
\begin{itemize}
\item $Q^{i j} = \setof{q_0^{ij}, q_1^{ij}}$ is the set of states where the states $q_0^{ij}$ and $q_1^{ij}$ encode that the link $ij$ is activated and deactivated, respectively.
\item $q^{i j}_0$ is an initial state (i.e., initially $ij$ is deactivated).
\item $\Sigma^{ i j} = \setof{\sigma_0^{ij}, \sigma_1^{ij}}$ is the control input alphabet, \textit{standby} or \textit{switch}.
\item $q_{k} \xrightarrow{\sigma_l^{i j}}_{ij} q_{k'}$, $l=0,1$ are state transitions,
\begin{itemize}
\item $q_{k} \xrightarrow{\sigma_0^{ij}}_{i j} q_{k}$ for $k = 0, 1$;
\item $q_{k} \xrightarrow{\sigma_1^{ij}}_{i j} q_{k'}$ for $k = 0, k' = 1$ or $k = 1, k' = 0$.
\end{itemize}
\item $\atomic^{i j} = \setof{\pi^{i j}}$ are the atomic propositions.
\item $o^{ij}(q_0) = \pi^{i j}$, $o(q_1) = \neg \pi^{i j}$ are observations of states.
\end{itemize}
\end{defn}

Then, we define a transition system modeling the entire network.
\begin{defn}
From a collection of $TS^{ i j}$, define the \textit{product transition system},
$\vec{TS} = (Q, Q_0, \Sigma, \rightarrow, \atomic, o)$.
\begin{itemize}
\item $Q = \setof{q^E}_{E \in \powerset{\edges}}$ are states.
\item $Q_0 \subseteq Q$ are initial states.
\item $\Sigma = \prod_{i j \in \mathcal{E}} \Sigma^{ij}$.
\item $\rightarrow \subseteq Q \times \Sigma \times Q$: there is a transition between every pair of states given by exactly one control input $\sigma \in \Sigma$. (Alternative transition systems could be designed, e.g.~by subsampling $\rightarrow$.)
\item $\atomic = \setof{\pi^{i j}}_{i j \in \edges}$ are the atomic propositions.
\item $o: Q \rightarrow \powerset{\atomic}$ given by $q^E \rightsquigarrow \setof{\pi^{i j}}_{i j \in E}$ is the observation map.
\end{itemize}
\end{defn}

We enrich our transition system with an associated cost to transition from one state to another. In task-coordination problems in robotics, traveling time between regions a robot is to visit is a reasonable choice of cost \cite{guo2015}. In our case, a reasonable choice of cost is a suitable metric on $\powerset{\edges}$ (which indexes $Q$). Possibilities include the Jaccard distance (e.g.~in \cite{onella2007}), and the Hausdorff distance between collections of activated sensor coordinates (e.g.~in \cite{zhu2009}).
Transition costs can encourage goals manifold, including but not limited to (i) fostering propagation of information, (ii) reducing energy costs associated to link rerouting, connection activation, or transmission power consumption, and (iii) avoiding trivial or undesirable (inefficient) traces (i.e. activate one edge at a time). The Jaccard distance used in our experiments (Section \ref{sec:simulations}),
\begin{equation}
	d_{\mathrm{jac}}(A,B) = 1 - \frac{|A \cap B |}{| A \cup B |},
\end{equation} penalizes switching control inputs, consequently, avoiding the trivial solution.

\begin{defn}
A \emph{weighted transition system} with \textit{cost function}, $c: \Sigma \rightarrow \mathbb{R}_{+}$, is a pair, $(\vec{TS}, c)$, where
\begin{equation}
c: \left( q^{E} \xrightarrow{\sigma \in \Sigma} q^{E'} \right) \rightsquigarrow c(\sigma).
\end{equation}
\end{defn}
For a \textit{finite} trace $\tau \in Q^\omega$, define the \textit{cost to go},
\begin{equation}
\label{eq:cost-to-go}
\hat{J}(\tau) = \sum_{t=1}^{|\tau|-1} c \left( \tau({t-1}) \xrightarrow{\sigma_t} \tau(t) \right).
\end{equation}

Our goal is to compute a schedule $\tau=\tau(0)\tau(1)\tau(2)\dots$ defined as an infinite sequence of states of $\vec{TS}$ that satisfies $\varphi$. In other words, $\tau$ determines the order in which links are activated and $\tau(k)\in Q$ collects the links that should be active at the $k$-th discrete time instant. Given any LTL
formula $\varphi$, if there exists a schedule $\tau$ satisfying $\varphi$, then it can be written in a finite representation, called prefix-suffix structure, i.e., $\tau=\tau_{\text{pre}}[\tau_{\text{suf}}]^{\omega}$, where the  prefix part $\tau_{\text{pre}}=\tau(0),\tau(1),\dots,\tau(K)$ is executed only once followed by the indefinite execution of the suffix part $\tau_{\text{suf}}=\tau(K+1),\tau(K+2),\dots,\tau(K+S)$, where $\tau(K+S)=\tau(K)$ and $\omega$ denotes the indefinite execution of $\tau_{\text{suf}}$. Among all prefix-suffix schedules that satisfy $\varphi$, we select one that incurs the minimum cost defined as 
\begin{equation}\label{eq:cost}
J(\tau)=\hat{J}(\tau_{\text{pre}}) + \hat{J}(\tau_{\text{suf}}).
\end{equation}
In other words, our method for solving
\begin{align*}
\mathrm{min}~J(\tau) \quad \text{such that} \quad \tau \models \varphi,
\end{align*}
is exhaustive search.
\subsection*{Laplacian Consensus}
The motivation for goal (i) i.e.~activating every link infinitely often is elucidated when sensors are tasked with propagating data they collect according to the link activation schedule. Let $\mathbf{y}(t) \in \mathbb{R}^N$ be a column-vector of states held by each sensor in the network at discrete-time instant $t$. Under a simple Laplacian consensus rule, convergence to a uniform consensus state depends on the topology of the switching network, a phenomenon well-studied. One sufficient condition to reach consensus is the following:
\begin{prop}[\cite{jadbabaie2003}[Theorem 2]
Suppose $\mathcal{G}(t)$ is a switching network and there is an infinite sequence of contiguous, nonempty, bounded intervals $\left[ t_i, t_{i+1} \right)$ starting at $t=0$ such that each union
\[
\bigcup_{t \in \left[ t_i, t_{i+1} \right)} \mathcal{G}(t)
\]
is connected.
Then, $\lim_{t \to \infty} \mathbf{y}(t) = \alpha \mathbf{1}$ where $\alpha$ depends only on $\mathbf{y}(0)$ and the topology of $\mathcal{G}(t)$.
\end{prop}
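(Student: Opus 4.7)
The plan is to cast the consensus dynamics in matrix form $\mathbf{y}(t+1) = P(t) \mathbf{y}(t)$, where $P(t) = I - \epsilon L(t)$, with $L(t)$ the graph Laplacian of $\graph(t)$ and $\epsilon > 0$ chosen small enough that $P(t)$ has strictly positive diagonal and nonnegative off-diagonal entries supported on the edges of $\graph(t)$. Each $P(t)$ is then row-stochastic (in fact doubly stochastic in the undirected case) with $P(t) \mathbf{1} = \mathbf{1}$, so the $\mathbf{1}$-span is exactly the set of consensus fixed points. Write $\Phi_i := P(t_{i+1}-1) \, P(t_{i+1}-2) \cdots P(t_i)$ for the transition operator over the $i$-th interval, and $\Phi(t,0) := P(t-1)\cdots P(0)$ for the full transition, so $\mathbf{y}(t) = \Phi(t,0) \mathbf{y}(0)$.

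The main step is to prove that each $\Phi_i$ is \emph{scrambling} with a uniform positive bound: there exists $\delta > 0$, independent of $i$, such that for every pair of nodes $u, v$ there is a column $w$ with $[\Phi_i]_{uw} \geq \delta$ and $[\Phi_i]_{vw} \geq \delta$. To see this, pick any $u, v$; since $\bigcup_{t \in [t_i, t_{i+1})} \graph(t)$ is connected, there exist edge-time pairs in the interval forming walks from $u$ and $v$ that meet at a common node $w$. Each activated edge contributes a factor of at least $\epsilon$ to the appropriate matrix entry, while the positive diagonal entries (bounded below by $1 - \epsilon\, \max_t \Delta(\graph(t))$) carry mass forward between edge uses. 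Because each interval has bounded length and there are only finitely many graph topologies on $N$ nodes, the number of factors in $\Phi_i$ is uniformly bounded, so a single $\delta$ works for all $i$.

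With uniform scrambling in hand, I would invoke the standard coefficient-of-ergodicity bound (Dobrushin/Hajnal): $\mu(\Phi_i) \leq 1 - \delta$, where $\mu(M) = \tfrac{1}{2}\max_{u,v}\sum_w |M_{uw} - M_{vw}|$. Since $\mu$ is submultiplicative on row-stochastic matrices and contracts the disagreement seminorm $d(\mathbf{y}) := \max_i y_i - \min_i y_i$, composing across the infinite sequence of intervals gives $d(\mathbf{y}(t)) \leq (1-\delta)^{n(t)} d(\mathbf{y}(0))$, where $n(t) \to \infty$. Hence $d(\mathbf{y}(t)) \to 0$, which forces $\mathbf{y}(t) \to \alpha \mathbf{1}$, with $\alpha$ determined by $\mathbf{y}(0)$ and the limit of $\Phi(t,0)$ (which exists and is rank one by Wolfowitz's theorem applied to the stochastic matrix product).

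The hard part is securing the \textbf{uniform} lower bound $\delta$ on the scrambling entries of $\Phi_i$. Without it, one could have an infinite product of contractions whose rates tend to $1$, and no consensus. The bounded-interval hypothesis is precisely what rules this out: it caps the word length in $\Phi_i$ and hence the exponent on the smallest positive entry of any $P(t)$. Once this uniformity is established, the rest is a standard telescoping estimate.
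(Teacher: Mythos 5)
First, a point of comparison: the paper does not prove this proposition at all --- it is imported verbatim as Theorem 2 of the cited Jadbabaie--Lin--Morse reference --- so there is no in-paper argument to measure yours against, and your attempt has to stand on its own. Your overall route (writing $\mathbf{y}(t+1) = P(t)\mathbf{y}(t)$ with $P(t) = I - \epsilon L(t)$ doubly stochastic with positive diagonal, then contracting the disagreement seminorm via a Dobrushin/Hajnal ergodicity coefficient) is a standard and viable way to prove the result, close in spirit to the original, which instead packages the interval products as SIA matrices and invokes Wolfowitz's theorem.

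There is, however, a genuine gap in your key lemma: connectivity of $\bigcup_{t \in [t_i, t_{i+1})} \mathcal{G}(t)$ does \emph{not} make the single-interval product $\Phi_i$ scrambling. What positive diagonals plus a connected union actually give you is that the graph of $\Phi_i$ contains the connected union graph together with all self-loops, i.e., $\Phi_i$ is irreducible and aperiodic (SIA) --- but two rows of such a matrix need not share a positive column. Concretely, take $N = 4$, the path $1$--$2$--$3$--$4$ active at every time step, and intervals of length one: then $\Phi_i = I - \epsilon L$ has row $1$ supported on columns $\{1,2\}$ and row $4$ supported on columns $\{3,4\}$, so no common column $w$ exists, yet the hypothesis of the proposition is satisfied. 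Your heuristic that walks from $u$ and $v$ ``meet at a common node $w$'' fails because those walks must be realized by suitably time-ordered edge activations within a \emph{single} interval, and a connected union of diameter greater than two gives no such guarantee. The standard repair is to group $N-1$ consecutive intervals: a product of $N-1$ stochastic matrices, each irreducible with positive diagonal, is entrywise positive, hence scrambling, and the uniform lower bound $\delta$ then follows exactly as you argue (uniformly bounded interval lengths, finitely many graph topologies on $N$ nodes, entries of each $P(t)$ bounded below). With that grouping in place, your submultiplicativity and telescoping estimate go through, and conservation of the average $\frac{1}{N}\mathbf{1}^{\top}\mathbf{y}(t)$ under the doubly stochastic updates identifies $\alpha$ without needing Wolfowitz.
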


After seeding the sensor network with initial state $\mathbf{y}(0)$, we use a simple Laplacian consensus rule to quantify the propagation of information across the network under LNC. For a trace $\tau = \tau(0) \tau(1) \cdots$, let $\graph(t)$ denote the subgraph of edges activated at time $t$. Let $\mathcal{V}(t)$ denote the vertices of $\mathcal{G}(t)$, and let $\mathcal{N}_{i}(t)$ denote the neighbors of node $i \in \mathcal{V}$ at time $t$. If the network is to reach a consensus state, it is necessary (but not sufficient) that a trace $\tau$ satisfies the liveness property,
\begin{equation}\label{eq:liveness}
\bigwedge_{i j \in \mathcal{E}} \always \eventually \pi^{i j}.
\end{equation}
A shrewder characterization of Laplacian consensus for switching networks controlled by LTL constraints is fertile ground for future research.

In Section \ref{sec:simulations}, we will implement a simple Laplacian update rule for a sensor network satisfying LNC.
Let $L(t)$ be the (un-normalized) \textit{graph Laplacian} \cite{chung1997} of the switching network $\mathcal{G}(t)$.
For step-size $\epsilon \in \left(0, 1\right)$, the Laplacian update is given by $\vec{y}(t+1) = \vec{y}(t) - \epsilon L(t) \vec{y}(t)$.
Specifically, for node $i$, this is the message-passing scheme,
\begin{equation}
\label{eq:consensus}
y_i(t+1) = 
\begin{cases}
y_i(t)  & \mathcal{N}_i(t) \cap \nodes(t)  = \emptyset \\
(1 -\epsilon) y_i(t) + \epsilon y_j(t) & \mathcal{N}_i(t) \cap \mathcal{V}(t) \neq \emptyset
\end{cases}.
\end{equation}

\subsection*{Summary}
The problem we address in this paper can be summarized as follows.
\begin{prob}\label{prob:centralized}
Compute a schedule $\tau$ in a prefix-suffix form that satisfies the communication task captured in \eqref{eq:global-phi} and minimizes the cost function \eqref{eq:cost}.
\end{prob}
\label{sec:problem}
\section{Design of Network Schedule}
\subsection{Centralized}
\label{sec:feasible}
Existing approaches can be employed to solve Problem \ref{prob:centralized} that rely on a coupling between the transition system, $\vec{TS}$, and what is called a \buchi automaton.
\begin{defn}
A \textit{Non-Deterministic \buchi Automaton} (NBA) is a tuple, $B = \left(S, \Sigma, \delta, S_0, F \right)$ where
\begin{itemize}
\item $S$ is a finite set of states,
\item $\Sigma$ is an input alphabet,
\item $\delta: S \times \Sigma \rightarrow \powerset{S}$, a transition map,
\item $S_0 \subseteq S$ is a set of initial states, and
\item $F \subseteq S$ is a set of accepting states.
\end{itemize}
\end{defn}

The global transition system and NBA are coupled so that $\Sigma = \powerset{\atomic}$.
We say that \textit{$B$ accepts an input word}, $\sigma = \sigma_0 \sigma_1 \sigma_2 \cdots \in \left( \powerset{\atomic}\right)^{\omega}$, if there is at least one (possibly many) sequence of states, $s = s(0) s(1) s(2) \cdots \in S^{\omega}$, such that $s(0) \in S_0$ and $\card{\setof{t: s(t) \in F}} = \infty$.

We say an LTL sentence $\varphi$ is \textit{translated} into an NBA, $B_\varphi$, if a $\sigma \models \varphi$ if and only if $\sigma$ is accepted by $B_\varphi$. It is known that every LTL sentence can be translated into an NBA (for a proof see \cite{baier2008}[p.~278]).
Once we have translated $\varphi$ into $B_{\varphi}$, we construct the \textit{Product \buchi Automaton} (PBA), $\vec{P} = \vec{TS} \bigotimes B_{\varphi}$ (see \cite{baier2008}[p.~200] or \cite{belta2017}[p.~42]).
After quotienting out the underlying digraph of $\vec{P}$ by strongly connected components, we perform a graph search for paths beginning at an initial state $\vec{q}_0$ that visit an accepting state in $Q \times F$ infinitely often.
Projecting this path in $\vec{P}$ onto $\vec{TS}$ yields a trace $\tau \models \varphi$.
The corresponding output word, $\tau_o$, is an infinite sequence of subsets of activated edges solving Problem \ref{prob:centralized}.

\subsection{Hierarchical}
\label{sec:distributed}
The centralized (optimal) solution to Problem \ref{prob:centralized} on $\graph$ is resource-demanding and computationally expensive, unmanageable even for relatively small networks as confirmed by our experiments.
Even if larger networks are manageable with more powerful computers, the centralized algorithm is not scalable: our goal in this section is to solve Problem \ref{prob:centralized} in a more computationally efficient way.

In the sequel, we do not assume sensors are capable of computing temporal logic plans; hence, we will employ additional agents called \textit{commanding agents} or \textit{command nodes} with this capability. Commanding agents form an intermediate layer of schedulers who compute schedules for sensors under their respective jurisdictions, then push theses schedules down to the executing sensor nodes.

We offer a robust approach---in principle, improving scalability---in which (additional) commanding agents (i) solve the local LNC problem for all sensors in their coverage region, and (ii) a global LNC problem is solved on the network formed by these commanding agents (Fig. \ref{fig:cartoon}). As an additional feature to the hierarchical approach, if new sensors are deployed or existing sensors fail, new plans need only be re-synthesized by commanding agents in range of these sensors.

We assume commanding agents have computing capabilities and are able to communicate with all sensor and command agents in range.
We also assume command agents have geospatial awareness. (Recall, we do not assume rank-and-file sensors know their locations or command nodes' locations.)  
Command agents, once deployed, remain at positions, $c_j$, $j \in \setof{1, \dots, K}$. Command agents, we assume, also reside in the (compact) domain $\sensors{Z}$.

Once activated, command agents can detect unique identifiers of rank-and-file sensors within their range $R$ as well as control the activation of links between these sensors.
Additionally, each commanding agent (node) is capable of communicating with peer commanding agents (nodes) whose coverage regions overlap (i.e.~within a range of $2R$), forming their own network $\mathcal{G}_{\mathrm{cmd}} = (\mathcal{V}_{\mathrm{cmd}}, \mathcal{E}_{\mathrm{cmd}})$ where $\mathcal{V}_{\mathrm{cmd}} = \setof{1, \dots, K}$ is the set of commanding nodes and $\mathcal{E}_{\mathrm{cmd}} = \setof{i j \vert~ \| c_i - c_j \| \leq 2 R} \subseteq \mathcal{V}_{\mathrm{cmd}} \times \mathcal{V}_{\mathrm{cmd}}$. Due to our assumption that command nodes have GPS capabilities, it is fair to assume that a given command node $j \in \mathcal{V}_{\mathrm{cmd}}$ is aware of the topology of the connected component of $\mathcal{G}_\mathrm{cmd}$ containing $j$. (Command nodes can transmit their unique identifiers and GPS coordinates to command nodes in range, this data and $R$ is enough to discern the topology of a connected component of $\mathcal{G}_\mathrm{cmd}$.)

\begin{figure}
\begin{center}
\includegraphics[width=\linewidth]{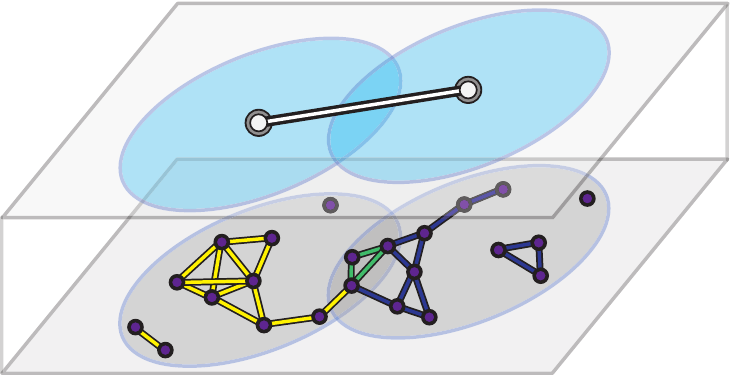}
\end{center}
\caption{Sensor network with two command nodes (white). Schedules $\tau_1$, $\tau_2$ for $\mathcal{G}_1$ (yellow/green edges) and $\mathcal{G}_2$ (blue/green edges) synthesized by command nodes $1$ and $2$ respectively. In this simple case, command nodes $1$ and $2$ alternate executing $\tau_1$ and $\tau_2$.}
\label{fig:cartoon}
\end{figure}

Local plans are synthesized as follows:
command nodes first collect sensor nodes within their range, $R$.
For each $j \in \mathcal{V}_{\mathrm{cmd}}$, let $\mathcal{G}_j = (\mathcal{V}_j, \mathcal{E}_j)$ be the subgraph of $\mathcal{G}$ generated by sensor nodes,
\[
\setof{i \in \mathcal{V} \vert~ \|x_i - c_j\| \leq R}.
\]
Then, commanding nodes compute a local specification $\varphi_j$ given by \eqref{eq:global-phi}--\eqref{eq:local-phi}, restricted to $\mathcal{G}_j$.

Each command node $j$ synthesizes an optimal plan $\tau_j \models \varphi_j$ (Section \ref{sec:feasible}); explicitly, a feasible prefix/suffix $\tau_{j, \mathrm{pre}}, \tau_{j, \mathrm{suf}}$ that minimizes the cost-to-go $J(\cdot)$. Command nodes, once activated, execute this plan.
Care must be taken to avoid activating adjacent command nodes simultaneously, as their coverage regions overlap, possibly leading to link-activation collisions. We employ LTL planning (Section \ref{sec:feasible}) to synthesize a (feasible) trace of command node activations; such a trace $\rho = \rho_{\mathrm{pre}} \left[\rho_{\mathrm{suf}}\right]^{\omega}$ is said to be feasible if $\rho \models \psi$ where
\begin{equation}
\label{eq:command-plan}
\psi = \bigconj_{j \in \mathcal{V}_{\mathrm{cmd}}} \left( \always \left( \pi^{j} \Rightarrow \bigconj_{k \in \mathcal{N}_{j}} \neg \pi^{k} \right) \conj \always \eventually \pi^{j} \right).
\end{equation}
$\psi$ specifies that every command node is activated infinitely often, and that whenever a command node is activated, every command node in its $1$-hop neighborhood is deactivated. Each command node computes a global plan of command node activations $\rho$. Any discrepancy in $\rho$ across $\mathcal{G}_{\mathrm{cmd}}$ is remedied by a simple consensus algorithm (e.g.~\cite{lamport1998}). Full details are described in Algorithm \ref{alg:sensor}.

\begin{algorithm}[h]
\caption{Hierarchical LNC (HLNC)}
\label{alg:sensor}
\begin{algorithmic}[1]
\renewcommand{\algorithmicrequire}{\textbf{Input:}}
\renewcommand{\algorithmicensure}{\textbf{Output:}}
\REQUIRE $\mathcal{G}$ (sensor net.), $x_i$ (locations of sensors), $c_j, j \in \{1,2, \cdots, K\}$ (locations of command agents), $J$ (cost-to-go)
\ENSURE $\tau \in \left(\powerset{\mathcal{AP}}\right)^{\omega}$ (global schedule of link activations)
\STATE command graph $\mathcal{G}_{\mathrm{cmd}}$ constructed by rule: $jk \in \edges_{\mathrm{cmd}}$ if $B_R(c_j) \cap B_R(c_k) \neq \emptyset$; command nodes in every connected component of $\mathcal{G}_{\mathrm{cmd}}$ aware of topology
\FOR{command nodes $j =1$ \TO $K$}
\STATE command node activation plan $\rho$ synthesized by $j$ according to $\psi$
\STATE verify consensus $\rho$ reached across $\mathcal{G}_{\mathrm{cmd}}$
\STATE $\mathcal{G}_j \leftarrow$ subgraph induced by $\setof{i \in \mathcal{V}~\vert~ \|c_j - x_i\| \leq R}$
\STATE $\varphi_j$ given by $\varphi_{\vert \mathcal{G}_j}$ translated to \buchi by $j$
\STATE command node $j$ computes all $\tau_j \models \varphi_j$
\STATE choose $\tau_j$ minimizing $J(\cdot)$
\ENDFOR
\RETURN $\tau(t) := \bigcup_{j \in \rho(t)} \tau_j(t)$
\end{algorithmic}
\end{algorithm}

\paragraph*{Remark}
Our distributed algorithm avoids the trivial feasible solution---activate every edge in the network one-at-a-time---because command nodes control the activations of multiple links in their domain, and multiple command nodes are activated simultaneously.
We call such trivial trace $\tau_{\mathrm{seq}}$. It is true \textit{a priori} $\tau_{\mathrm{seq}} \models \varphi$. However, $ \tau_{\mathrm{seq}}$ is highly suboptimal, e.g.~for Jaccard distance (the cost of every transition is maximal i.e.~$1$). While we will prove below that a feasible trace is returned by Algorithm \ref{alg:sensor}, such a trace is nevertheless suboptimal
because minimizing the cost of each summand in
\[
J(\tau) = \sum_{j \in \mathcal{V}_{\mathrm{cmd}}} J(\tau_j)
\]
is not equivalent to minimizing the entire sum.

\subsection{Analysis}
In order that Algorithm \ref{alg:sensor} returns a trace satisfying $\varphi$ \eqref{eq:global-phi}, it is integral that every link of the sensor network is detected by at least one command node. We supply a sufficient condition for this property to hold.
\begin{prop}
\label{thm:edge-contained}
Suppose there exists $\epsilon \in \left(r, R \right)$ such that
\[
\bigcup_{j \in \mathcal{V}_{\mathrm{cmd}}} B_{R-\epsilon}(c_j) \supseteq \sensors{X}
\]
is a finite cover of $\sensors{X}$ by command nodes at locations $c_j, j \in \setof{1, \dots, K}$.
Then, every edge $ii' \in \mathcal{E}$ is contained in (at least one) command subgraph $\mathcal{G}_j$ for some $j \in \mathcal{V}_{\mathrm{cmd}}$.
\end{prop}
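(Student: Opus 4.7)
The plan is to fix an arbitrary edge $ii' \in \edges$ and produce a command node $j$ whose subgraph $\mathcal{G}_j$ contains both endpoints. The hypothesis $ii' \in \edges$ gives $\|x_i - x_{i'}\| \leq r$, and the covering hypothesis gives, for the endpoint $x_i \in \sensors{X}$, some $j \in \mathcal{V}_{\mathrm{cmd}}$ with $x_i \in B_{R-\epsilon}(c_j)$, i.e., $\|x_i - c_j\| < R - \epsilon$. It will remain to show that the \emph{other} endpoint also lies within range $R$ of $c_j$.

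For that, I would apply the triangle inequality directly:
\begin{equation*}
\|x_{i'} - c_j\| \leq \|x_{i'} - x_i\| + \|x_i - c_j\| < r + (R - \epsilon),
\end{equation*}
and then invoke the hypothesis $\epsilon > r$ to conclude $\|x_{i'} - c_j\| < R$. Combined with $\|x_i - c_j\| < R - \epsilon < R$, this places both $i$ and $i'$ in the vertex set $\mathcal{V}_j = \{i \in \mathcal{V} \vert~\|x_i - c_j\| \leq R\}$. Since $\mathcal{G}_j$ is defined as the subgraph of $\mathcal{G}$ induced by $\mathcal{V}_j$, the edge $ii'$ belongs to $\mathcal{E}_j$, as required.

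There is no real obstacle: the proposition is essentially a quantitative statement about Lebesgue-number-style slack in the cover. The only subtlety worth highlighting is precisely why the strict inequality $\epsilon > r$ (rather than $\epsilon \geq r$) is forced: it is what produces strict containment in the open ball $B_R(c_j)$, hence placement in $\mathcal{V}_j$, for the endpoint that is not itself in the shrunken cover element. I would write one short paragraph doing the triangle inequality, and perhaps one remark noting that the shrinking parameter $\epsilon$ functions as a buffer guaranteeing that edges never "straddle" the boundary between coverage regions without being subsumed into at least one.
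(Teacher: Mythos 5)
Your proof is correct and follows essentially the same route as the paper's: cover the endpoint $x_i$ by a shrunken ball $B_{R-\epsilon}(c_j)$ and use the triangle inequality with $\epsilon > r$ to pull $x_{i'}$ into $B_R(c_j)$, hence both endpoints into $\mathcal{G}_j$. If anything, yours is slightly tidier, since the paper also introduces a second command node $j'$ covering $x_{i'}$ that its argument never uses.
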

\begin{proof}
A finite cover exists by compactness of $\mathbf{Z}$. Let $x_{i}, x_{i'} \in \sensors{X}$, and $i i' \in \mathcal{E}$.  By assumption, there is some $j$ such that $B_{R-\epsilon}(c_j)$ contains $x_{i}$ and some $j'$ such that $B_{R-\epsilon}(c_{j'})$ contains $x_{i'}$. By assumption, $\lVert x_i - x_{i'} \rVert \leq r$. Likewise, $\lVert  c_{j} - x_{i} \rVert \leq R - \epsilon$ so that 
\begin{align*}
\lVert c_j - x_{i'} \rVert  \leq \lVert c_j - x_{i} \rVert + \lVert x_{i} - x_{i'} \rVert \\
\leq R - \epsilon + r  < R
\end{align*}
by the triangle inequality.
Hence, $B_R(c_j)$ contains $x_i$ and $x_{i'}$. Thus, $\mathcal{G}_j$ contains the link $ii' \in \mathcal{E}$.
\end{proof}
As a convenience, we may assume without loss of generality (under the conditions above) that the commanding network is connected; Algorithm \ref{alg:sensor} can be run in parallel on each connected component of $\mathcal{G}_{\mathrm{cmd}}$ without modification.
\begin{cor}
Connected components of $\mathcal{G}$ are controlled only by command nodes belonging to the same connected component of $\mathcal{G}_{\mathrm{cmd}}$.
\end{cor}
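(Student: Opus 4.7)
The plan is to reduce the corollary to Proposition~\ref{thm:edge-contained} together with a short triangle-inequality argument showing that any two command nodes whose coverage balls share a common sensor must be adjacent (or equal) in $\mathcal{G}_{\mathrm{cmd}}$. First I would fix a connected component $C$ of the sensor graph $\mathcal{G}$, pick any two sensors $i, i' \in C$, and let $j, j' \in \mathcal{V}_{\mathrm{cmd}}$ be command nodes whose subgraphs contain $i$ and $i'$ respectively (such nodes exist by the hypothesis that the balls $B_{R-\epsilon}(c_j)$ cover $\sensors{X}$). The goal is to show that $j$ and $j'$ lie in the same connected component of $\mathcal{G}_{\mathrm{cmd}}$, for then any two command nodes controlling sensors in $C$ are themselves connected through $\mathcal{G}_{\mathrm{cmd}}$.

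The key auxiliary observation I would record first is the following: if $c_j$ and $c_{j'}$ both lie within distance $R$ of a common point $x \in \sensors{X}$, then $\lVert c_j - c_{j'} \rVert \leq \lVert c_j - x \rVert + \lVert x - c_{j'} \rVert \leq 2R$, so either $j = j'$ or $jj' \in \mathcal{E}_{\mathrm{cmd}}$. This is exactly the edge rule used to construct $\mathcal{G}_{\mathrm{cmd}}$ in Algorithm~\ref{alg:sensor}.

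Next, since $C$ is connected, I would choose a path $i = i_0, i_1, \dots, i_k = i'$ in $\mathcal{G}$. By Proposition~\ref{thm:edge-contained}, each edge $i_\ell i_{\ell+1}$ is contained in some command subgraph $\mathcal{G}_{j_\ell}$, which means both $x_{i_\ell}$ and $x_{i_{\ell+1}}$ lie in $B_R(c_{j_\ell})$. Applying the auxiliary observation to consecutive pairs, the sequence $j, j_0, j_1, \dots, j_{k-1}, j'$ is a walk in $\mathcal{G}_{\mathrm{cmd}}$ (with equal consecutive entries collapsed): $j$ and $j_0$ both cover $x_{i_0}$; $j_{\ell}$ and $j_{\ell+1}$ both cover $x_{i_{\ell+1}}$; and $j_{k-1}$ and $j'$ both cover $x_{i_k}$. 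Hence $j$ and $j'$ belong to the same connected component of $\mathcal{G}_{\mathrm{cmd}}$, which is what we needed.

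I expect the proof to be short and essentially routine, so there is no serious obstacle; the only subtle point is ensuring that the walk assembled above is actually valid at its endpoints, namely that $j$ (the nominal controller of $x_i$) and $j_0$ (the edge-covering controller from the proposition) are adjacent in $\mathcal{G}_{\mathrm{cmd}}$, and symmetrically at the $j'$ end. This is handled uniformly by the two-cover triangle-inequality observation, so I would state that observation as the lone nontrivial step before running through the chain.
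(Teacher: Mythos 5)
Your argument is correct and rests on the same engine as the paper's own proof: Proposition~\ref{thm:edge-contained} supplies a single command ball containing both endpoints of any sensor edge, and two command nodes covering a common sensor are forced to be adjacent in $\mathcal{G}_{\mathrm{cmd}}$ by the triangle inequality. The paper compresses this into a two-line contradiction (adjacent sensors cannot lie in different connected components of $\bigcup_{j} B_R(c_j)$), leaving the chaining along a path and the identification of ball-overlap components with components of $\mathcal{G}_{\mathrm{cmd}}$ implicit, whereas you spell out the walk explicitly --- a more complete rendering of the same idea.
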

\begin{proof}
Suppose, on the contrary, there are nodes $i, i' \in \mathcal{V}$ such that $x_i$ and $x_{i'}$ are contained in different connected components of $\mathcal{U}:=\bigcup_{j \in \mathcal{V}_{\mathrm{cmd}}} B_R(c_j)$, but $\lVert x_i - x_{i'} \rVert \leq r$. Then, by Proposition \ref{thm:edge-contained}, there is a single $j$ such that $B_R(c_j) \supseteq \setof{x_i, x_{i'}}$. In particular, this means $x_i$ and $x_{i'}$ lie in the same component of $\mathcal{U}$, hence, $\mathcal{G}_{\mathrm{cmd}}$.
\end{proof}
The HLNC algorithm guarantees a feasible suboptimal solution.
\begin{thm}
\label{thm:completeness}
Suppose the hypothesis of Proposition \ref{thm:edge-contained} is satisfied, and suppose without loss of generality that $\mathcal{G}_{\mathrm{cmd}}$ is connected. Then, Algorithm \ref{alg:sensor} returns a feasible trace $\tau \models \varphi$ if such a trace exists.
\end{thm}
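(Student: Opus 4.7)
The plan is to verify the two conjuncts of every $\varphi_{ij}$ in \eqref{eq:local-phi} for the trace $\tau$ assembled by Algorithm~\ref{alg:sensor}. First I would confirm that each synthesis subproblem is feasible, so that the inner calls to the centralized procedure (Section~\ref{sec:feasible}) succeed: every $\varphi_j$ admits at least the trivial cyclic schedule that activates one edge of $\mathcal{G}_j$ at a time, and $\psi$ admits the analogous sequential activation on $\mathcal{G}_{\mathrm{cmd}}$. Hence $\tau_j$ and $\rho$ are well-defined objects from which the algorithm makes its selections.

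Next I would verify the liveness clause (b) of \eqref{eq:local-phi}. By Proposition~\ref{thm:edge-contained}, every $ii' \in \mathcal{E}$ lies in some $\mathcal{E}_j$, so $\tau_j \models \varphi_j$ ensures that $ii'$ is activated infinitely often with respect to the local clock of command node $j$. Since $\rho \models \psi$ forces each $j$ to be active infinitely often, the local clock of $j$ is unbounded when embedded into the global execution, giving $\always \eventually \pi^{ii'}$ for the global trace.

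The technical heart of the proof is the non-interference clause (a). Fix a link $ij$ that is active at some time $t$, and consider an arbitrary $i'j' \in \mathcal{N}_{ij} \setminus \{ij\}$; suppose for contradiction $i'j'$ is also active at time $t$. Then there exist $j_1, j_2 \in \rho(t)$ that activate $ij$ and $i'j'$ respectively. The case $j_1 = j_2$ is immediately ruled out by clause (a) of $\varphi_{j_1}$, which $\tau_{j_1}$ satisfies. When $j_1 \neq j_2$, the edges $ij$ and $i'j'$ share an endpoint---say $x_i$---so that
\[
\|c_{j_1} - c_{j_2}\| \leq \|c_{j_1} - x_i\| + \|x_i - c_{j_2}\| \leq 2R,
\]
hence $j_1 j_2 \in \mathcal{E}_{\mathrm{cmd}}$; but $\psi$ forbids adjacent command nodes from being simultaneously active, contradicting $j_1, j_2 \in \rho(t)$.

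The main obstacle I anticipate is making the assembled trace $\tau(t) = \bigcup_{k \in \rho(t)} \tau_k(t)$ precise as a prefix-suffix $\omega$-word. Each local $\tau_j$ carries its own period and is intended to advance only while $j \in \rho(t)$, whereas $\rho$ carries a separate prefix-suffix structure. One must lift to a common period (for instance, a least common multiple of the suffix lengths of $\rho$ and of the $\tau_j$) so that the liveness and non-interference arguments above apply verbatim to a single synchronized trace. This book-keeping is routine but needs to be stated explicitly; once it is in place, completeness follows, and existence of a feasible trace is in fact established unconditionally, making the hypothesis ``if such a trace exists'' automatic.
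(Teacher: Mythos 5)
Your proof is correct and follows essentially the same route as the paper's own sketch: liveness comes from Proposition~\ref{thm:edge-contained} combined with $\rho \models \psi$, and non-interference comes from the fact that simultaneously active command nodes are non-adjacent in $\mathcal{G}_{\mathrm{cmd}}$, so their $R$-balls cannot both contain the shared endpoint of two links in a common $\enbhd{ij}$. Your explicit triangle-inequality argument for clause (a) and, especially, your flagged synchronization issue are details the paper's sketch omits but should contain --- the returned trace $\tau(t) = \bigcup_{j \in \rho(t)} \tau_j(t)$ is genuinely ambiguous about whether $\tau_j$ advances on a global or local clock, and under the global-clock reading liveness can fail when the suffix periods of $\rho$ and $\tau_j$ align badly, so your local-clock/common-period resolution is not mere book-keeping.
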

We sketch the proof.
\begin{proof} We show the output $\tau$ satisfies parts (a) and (b)---local non-interference and liveness---of $\varphi_{i j}$ \eqref{eq:local-phi} for every link $i j \in \mathcal{E}$.

By construction, whenever a command node $j$ is activated and initiates a local plan $\tau_j$ satisfying $\varphi_j$, neighboring command nodes are deactivated. This implies that $\tau_j$ extends properly to a global plan satisfying $\varphi$ because $R$-balls centered around simultaneously-activated command nodes intersect trivially.

Command nodes are activated infinitely often according to a schedule $\rho \models \psi$. By Proposition \ref{thm:edge-contained}, every link of the sensor network is contained in a command subgraph $\mathcal{G}_j$ in which every edge is activated infinitely often. Hence, every edge of the entire network is activated infinitely often.
\end{proof}
A final result contrasts the computational complexity of the centralized and hierarchical algorithms.

\begin{thm}
Let $e_{\mathrm{max}} = \mathrm{max}_{j \in \mathcal{V}_{\mathrm{cmd}}} | \mathcal{E}_j |$. Assuming the hypothesis of Proposition \ref{thm:edge-contained}, the worst-case time complexity of Algorithm \ref{alg:sensor} is
\[
\mathcal{O} \left( K 2^{\mathcal{O}(e_\mathrm{max}) + \mathcal{O}(K)} \right).
\]
The worst-case time complexity of the centralized algorithm is
\[
\mathcal{O} \left( 2^{\mathcal{O}(|\mathcal{E}|)} \right).
\]
\end{thm}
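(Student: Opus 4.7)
The plan is to reduce both bounds to two standard pieces in LTL-based control synthesis: (i) translation of an LTL formula $\varphi$ into a nondeterministic B\"uchi automaton $B_\varphi$, which costs $2^{\mathcal{O}(|\varphi|)}$ time and produces an automaton of the same size, and (ii) the accepting-run search on the product $\vec{TS} \otimes B_\varphi$, which runs in time polynomial in the product size. Throughout, I would absorb polynomial overheads (formula preprocessing, subgraph extraction, and the consensus step in line 4 of Algorithm \ref{alg:sensor}) into the leading exponential.

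For the centralized algorithm, I would first bound $|\varphi|$ in \eqref{eq:global-phi} by $\mathcal{O}(|\mathcal{E}|)$, using that each local formula $\varphi_{ij}$ has bounded size for sparse networks and that any polynomial blow-up from summing over neighborhoods is absorbed into the $\mathcal{O}(\cdot)$ inside the exponent. Since $|\vec{TS}| = 2^{|\mathcal{E}|}$ already by construction (its states are indexed by subsets of $\mathcal{E}$), the product $\vec{P}$ has size $2^{|\mathcal{E}|} \cdot 2^{\mathcal{O}(|\mathcal{E}|)} = 2^{\mathcal{O}(|\mathcal{E}|)}$, yielding the stated centralized bound.

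For the hierarchical algorithm, I would observe that each of the $K$ iterations of the for-loop in Algorithm \ref{alg:sensor} performs two independent LTL synthesis problems. The first is for $\psi$ in \eqref{eq:command-plan}, a formula of length $\mathcal{O}(K)$ over a transition system whose states are subsets of $\mathcal{V}_{\mathrm{cmd}}$ (size $2^{K}$), costing $2^{\mathcal{O}(K)}$. The second is for the local $\varphi_j$, a formula of length $\mathcal{O}(|\mathcal{E}_j|) \leq \mathcal{O}(e_{\mathrm{max}})$ on a transition system of size at most $2^{e_{\mathrm{max}}}$, costing $2^{\mathcal{O}(e_{\mathrm{max}})}$. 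Per iteration the cost is $2^{\mathcal{O}(e_{\mathrm{max}})} + 2^{\mathcal{O}(K)} = \mathcal{O}(2^{\mathcal{O}(e_{\mathrm{max}}) + \mathcal{O}(K)})$, and multiplying by the $K$ iterations delivers the stated bound.

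The step I expect to be the main obstacle is the bookkeeping of formula length. A naive count gives $|\varphi| = \sum_{ij \in \mathcal{E}} |\mathcal{N}_{ij}|$, which is quadratic in $|\mathcal{E}|$ for dense graphs, so I would either argue sparsity (from the geometric constraint $\|x_i - x_j\| \leq r$ on a compact domain, which implies bounded maximum degree) or emphasize that the $\mathcal{O}(\cdot)$ inside the exponent absorbs these polynomial factors. A secondary subtlety is verifying that line 4 (consensus on $\rho$) and line 8 (selecting the minimizer of $J(\cdot)$ from a finite candidate list) of Algorithm \ref{alg:sensor} contribute only polynomially, which follows from standard analyses of synchronous consensus and from iterating over an already-computed finite set of feasible prefix-suffix candidates.
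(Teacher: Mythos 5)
Your proposal follows essentially the same route as the paper's proof: invoke the standard $\mathcal{O}\left(|Q|\,2^{|\varphi|}\right)$ bound for LTL model checking, take $|Q|=2^{|\mathcal{E}|}$ and $|\varphi|=\mathcal{O}(|\mathcal{E}|)$ for the centralized case, and for the hierarchical case combine the $2^{\mathcal{O}(K)}$ cost of synthesizing $\rho\models\psi$ with the $2^{\mathcal{O}(e_{\mathrm{max}})}$ cost of each local plan across the $K$ command nodes. The one caveat you flag is real --- the naive count $\sum_{ij\in\mathcal{E}}|\enbhd{ij}|$ can be superlinear in $|\mathcal{E}|$, and a polynomial blow-up there is \emph{not} absorbed by the $\mathcal{O}(\cdot)$ inside an exponent asserted to be $\mathcal{O}(|\mathcal{E}|)$, so only your bounded-degree/sparsity argument actually closes that step --- but the paper's own proof simply asserts $|\varphi|=\mathcal{O}(|\mathcal{E}|)$ without comment, so your treatment is, if anything, more careful on exactly the point where both arguments are thinnest.
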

Again, we sketch a proof.
\begin{proof}
The complexity of solving the LTL model-checking problem (equivalent to the optimal temporal logic planning problem) for a proposition $\varphi$ and a transition system with states $Q$ is $\mathcal{O}\left( |Q| 2^{|\varphi|} \right)$ where $|\varphi|$ is the length of the proposition (roughly, the number of symbols) \cite{belta2017}.

For the centralized algorithm, we check that $|\varphi| = \mathcal{O}(|\mathcal{E}|)$ and $|Q| = 2^{|\mathcal{E}|}$ yielding $\mathcal{O} \left( 2^{\mathcal{O}(|\mathcal{E}|)} \right)$ complexity. For the hierarchical algorithm, we first consider the complexity of satisfying $\psi$ \eqref{eq:command-plan}.
This is $\mathcal{O}\left( K 2^{|\psi|} \right) = \mathcal{O}\left( K 2^{\mathcal{O}(K)} \right)$.
We multiply this complexity by the worst-case execution of the centralized algorithm to find a feasible plan satisfying local $\varphi_j$ for each command node $j$ to yield $\mathcal{O} \left( K 2^{\mathcal{O}(e_{\max}) + \mathcal{O}(K)} \right)$.
\end{proof}
As expected, the Algorithm \ref{alg:sensor} boasts a significant improvement to the centralized algorithm when both $K$ and $e_{\mathrm{max}}$ are not too large as compared to $|\mathcal{E}|$.
\label{sec:solution}
\section{Simulations}
\label{sec:simulations}
\subsection*{Software}
We use the package \texttt{P\_MAS\_TG} \cite{guo2015} for optimal temporal logic planning.
The centralized component of our code calculates a $\varphi$ from an input graph $\graph$, then proceeds to translate $\varphi$ and find an optimal plan.
\texttt{P\_MAS\_TG} utilizes in the background the package \texttt{ltl2ba} \cite{gastin2001} that translates a given LTL sentence into a \buchi automaton.
The decentralized component of our code implements Algorithm \ref{alg:sensor} and stores $\mathcal{G}_j$, $\tau_{j}$ as \texttt{networkx} \cite{hagberg2008} attributes of command nodes.
\subsection*{Finding an optimal trace}
A sampling algorithm, such as the one presented in \cite{kantaros2018}, can find a feasible $\tau$ than approximately minimizes $J(\cdot)$.
We do not implement this algorithm here as it is beyond the scope of this paper.
Instead, we use the out-of-the-box path distance minimizing feature of \texttt{P\_MAS\_TG}.
The package \texttt{P\_MAS\_TG} minimizes traveling distance between given region coordinates (e.g. locations in a warehouse for a robot to visit).
\subsection*{Command nodes}
We select positions for $K=10$ commanding agents. To ensure these command nodes cover every sensor in the network, we chose cluster centers $c_j$ of the positions of rank-and-file sensors $x_i$ via k-means, which selects a partition $\{S_1, \dots, S_K \}$ of the sensors $\sensors{X}$ that minimizes the distance of the means (i.e.~command nodes)
\[ c_j = \frac{1}{|S_j|} \sum_{x \in S_j} x \]
 to points (i.e.~sensor nodes) in each cluster.
Clusters and their respective means are computed by solving the optimization problem
\begin{equation}\label{eq:kmeans}
\min_{S_1, \dots, S_K} \sum_{j=1}^K \sum_{x \in S_j} \lVert x - c_j \rVert
\end{equation}
by standard methods \cite{lloyd1982}.
\subsection*{Data}
We study a real-world sensor network $N = 54$ located inside the Berkeley National Laboratory \cite{madden2004}. We assume sensors communicate at a radius $r=6$ and commanding agents can control link connectivity within $R = 8$. The resulting graphs $\mathcal{G}$, $\mathcal{G}_{\mathrm{cmd}}$ have $88$ and $13$ edges, respectively.
(Fig.~\ref{fig:sensor-net}).
\begin{figure}[h]
\begin{center}
\includegraphics[width=\linewidth]{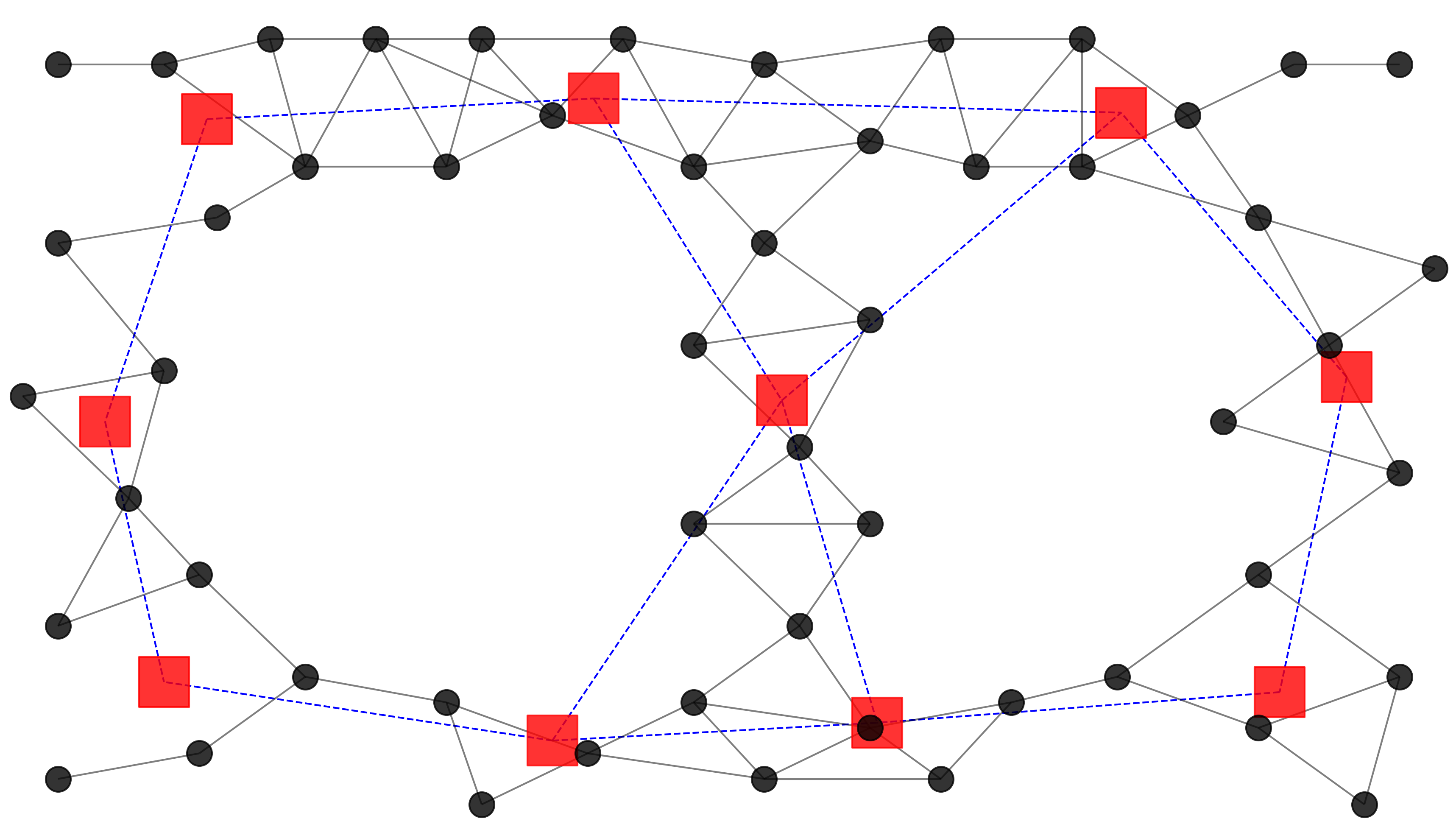}
\end{center}
\caption{Sensor network consisting of $54$ sensors in Berkeley National Lab with rank-and-file sensors (black) and command agents (red).}
\label{fig:sensor-net}
\end{figure}

\subsection*{Discussion of results}
We compare intermittent consensus ($\epsilon = 0.5$) of the computed global trace $\tau$ (Algorithm \ref{alg:sensor}) with intermittent consensus of the sequential trace $\tau_{\mathrm{seq}}$. Our simulation demonstrates that HLNC rapidly achieves localized consensus (Fig.\ \ref{fig:consensus}), locally converging faster than $\tau_{\mathrm{seq}}$; however, $\tau$ (globally) falls behind $\tau_{\mathrm{seq}}$ in the long run and may converge to more than one point of consensus (two, in our case). Perhaps the trade-off here is that the efficiency (link activations per unit time) of $\tau$ ($\approx 4-8 \%$) is far greater than the efficiency ($\approx 1 \%$) of $\tau_{\mathrm{seq}}$. 

\begin{figure}[h]
\begin{center}
\includegraphics[width=\linewidth]{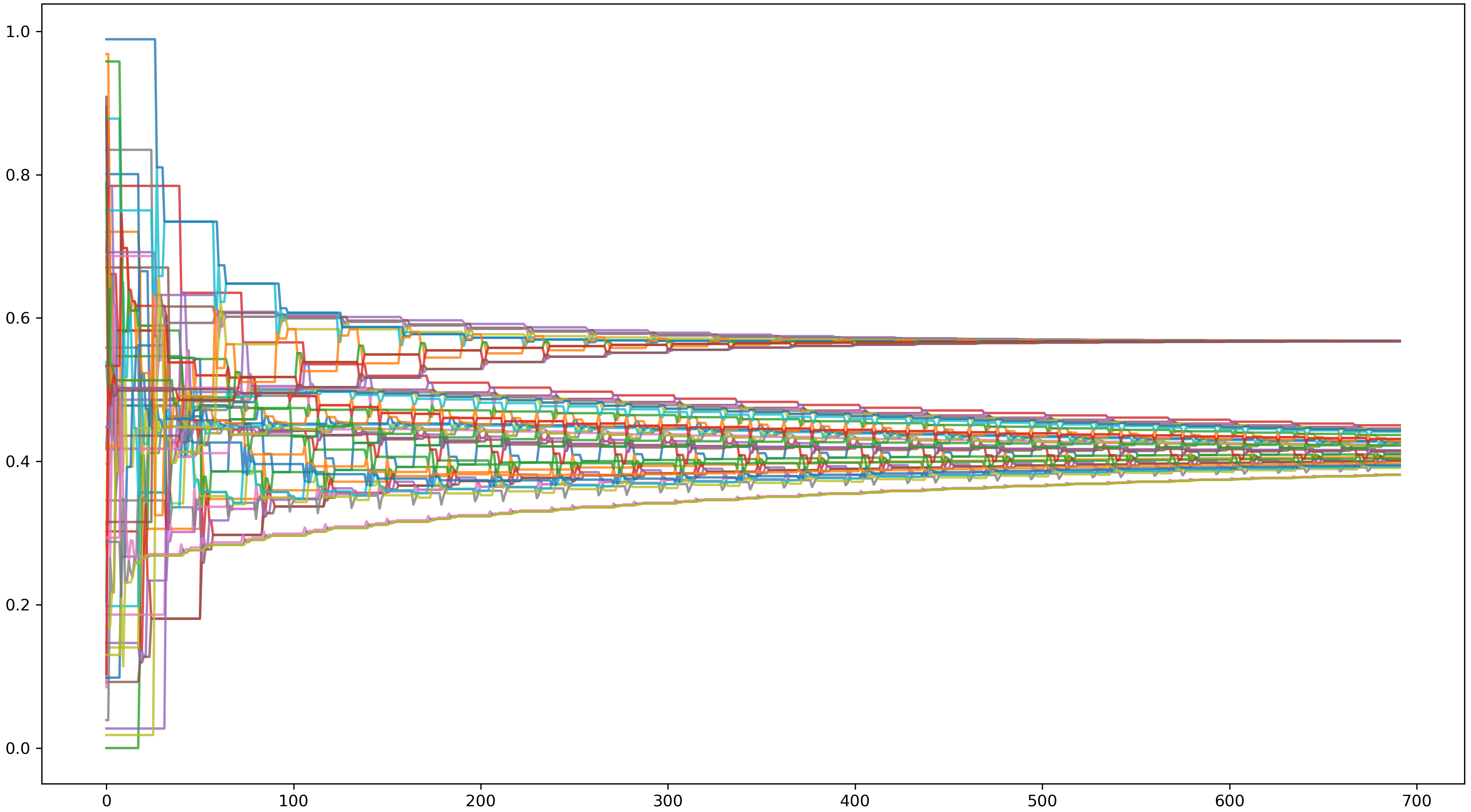}
\caption{Intermittent consensus \eqref{eq:consensus} of sensors following $\tau \models \varphi$ computed with Algorithm \ref{alg:sensor}.}
\label{fig:consensus}
\end{center}
\end{figure}

\section{Discussion}
\label{sec:discussion}
\subsection*{Semantic Propagation}
It is perhaps a compelling question whether there is a general framework for networked temporal logic planning for a specification of the form
\[
\varphi = \bigwedge_{i \in \nodes} \varphi_i
\]
studied both in this work and in \cite{kantaros2016}.

A truly decentralized approach would be to compute LTL plans on individual nodes of the network, rather than on command nodes.
Recent work develops a new theory of Laplacian flow (i.e. a consensus algorithm) for lattice-valued data (e.g.~boolean data, LTL specifications and more) over a network \cite{ghrist2020}.
Avoiding the mathematical methods of \cite{ghrist2020}, the LTL Laplacian can be interpreted as the map
\[
(L \varphi)_i = \bigconj_{j \in \mathcal{N}_i \cup i} \varphi_j
\]
taking the conjunction of $\varphi_i$ with all formulae over the neighborhood of $i$.
Semantic propagation is the result of iterating this Laplacian, eventually leading to locally constant (i.e. consensus over connected components) semantics over the entire network.

\subsection*{Non-geometric networks}
Suppose $\graph$ is not a sensor network i.e. nodes are not given as coordinates and edges are not defined by their proximity. Then, there is a modification of Algorithm \ref{alg:sensor} to solve Problem \ref{prob:centralized} in this case.
Define the \textit{$k$-hop neighborhood of a node}, $\nbhd{k}{i}$.
Fix $k>0$ (analogous to $R$). Choose command nodes $\mathcal{V}_{\mathrm{cmd}} \subseteq \mathcal{V}$. For each $j$ in  $\mathcal{V}_{\mathrm{cmd}}$, let $\mathcal{G}_j$ be the subgraph generated by $\mathcal{N}_k(j)$. Then, $\mathcal{E}_{\mathrm{cmd}} = \setof{ij \in \mathcal{V}_{\mathrm{cmd}} \times \mathcal{V}_{\mathrm{cmd}}~\vert~\mathcal{N}_k(i) \cap \mathcal{N}_k(j) \neq \emptyset}$. The rest of the algorithm holds as above.

\subsection*{Fairness}
As astutely noted by a referee, the liveness specification \eqref{eq:liveness} does not ensure that some nodes are not granted special privilege to connect with other nodes more frequently. A possible solution is to add a (local) fairness specificaiton,
\[
\chi_{i j} = \always \left(\pi^{ij} \Rightarrow \always \left( \neg \pi^{i j} \mathcal{U} \left( \bigvee_{i'j' \in \mathcal{N}_{ij} \setminus \{ij\}} \pi^{i'j'} \right) \right) \right).
\]
\section{Acknowledgments}
 The authors would like to thank the reviewers for their helpful feedback. The authors [HR, RG] were supported by Office of the Assistant Secretary of Defense Research \&  Engineering through a Vannevar Bush Faculty Fellowship, ONR N00014-16-1-2010. The authors [YK, GP] were supported by ARO DCIST and AFOSR Assured Autonomy.

\end{document}